\documentclass[letterpaper, 10 pt, conference]{ieeeconf} 

\IEEEoverridecommandlockouts                            
                                                          
\usepackage[utf8]{inputenc}
\usepackage[T1]{fontenc}
\usepackage{amsmath,amssymb,amsfonts}
\usepackage{times}
\usepackage{graphicx}
\usepackage{subcaption}
\usepackage{epsfig}
\usepackage{cite}
\usepackage{hyperref}
\usepackage{color}
\usepackage{mathtools}
\usepackage{bm}
\usepackage{algorithm}
\usepackage{algpseudocode}
\usepackage{url}
\usepackage{tikz}
\usepackage[percent]{overpic}
\usetikzlibrary{positioning, arrows.meta}
\usepackage{xcolor}

\newtheorem{lemma}{\textbf{Lemma}}
\newtheorem{theorem}{\textbf{Theorem}}

\title{\LARGE \bf
Pitot-Aided Attitude and Air Velocity Estimation with Almost Global Asymptotic Stability Guarantees
}

\author{Méloné Nyoba Tchonkeu$^{1}$, Soulaimane Berkane$^{2}$, and Tarek Hamel$^{3}$ 
\thanks{This work was supported by the"Grands Fonds Marins" Project Deep-C, and the ASTRID ANR project ASCAR. This research work is also supported in part by NSERC-DG RGPIN-2020-04759 and Fonds de recherche du Québec (FRQ).}%
\thanks{$^{1}$M. Nyoba Tchonkeu is with the Department of Computer Science and Engineering, University of Quebec in Outaouais, 
        Gatineau, QC J8X3X7, Canada
        {\tt\small (nyom01@uqo.ca)}}%
\thanks{$^{2}$S. Berkane is with the Department of Computer Science and Engineering, University of Quebec in Outaouais, Gatineau, QC J8X3X7, and also with the Department of Electrical Engineering, Lakehead University,
        Thunder Bay, ON P7B 5E1, Canada
        {\tt\small (soulaimane.berkane@uqo.ca)}}%
\thanks{$^{3}$T. Hamel is with I3S-UniCA-CNRS, University Cote d’Azur and the Insitut Universitaire de France,
        06903 Sophia Antipolis, France
        {\tt\small (thamel@i3s.unice.fr)}}%
}

\begin{document}

\maketitle
\thispagestyle{empty}
\pagestyle{empty}

\begin{abstract}
This paper investigates the problem of attitude and air velocity estimation for fixed-wing unmanned aerial vehicles (UAVs) using IMU measurements and at least one Pitot tube measurement, with almost global asymptotic stability (AGAS) guarantees. A cascade observer architecture is developed, in which a Riccati/Kalman-type filter estimates the body-fixed frame air velocity and the vehicle's tilt using IMU data as inputs and Pitot measurements as outputs. Under mild excitation conditions, the resulting air velocity and tilt estimation error dynamics are shown to be uniformly observable. The estimated tilt is then combined with magnetometer measurements in a nonlinear observer on $\mathrm{SO}(3)$ to recover the full attitude. Rigorous analysis establishes AGAS of the overall cascade structure under the uniform observability (UO) condition. The effectiveness of the proposed approach is demonstrated through validation on real flight data.
\end{abstract}

\section{INTRODUCTION}
Reliable attitude estimation is a fundamental requirement for UAVs, particularly in GNSS-denied or highly dynamic environments, where navigation performance relies primarily on onboard sensing. In such conditions, the inertial measurement unit (IMU), providing angular velocity and specific acceleration measurements, serves as the primary sensing modality. In many practical estimation frameworks, particularly for lightweight UAVs, the gravity direction is inferred from accelerometer measurements under the assumption that non-gravitational accelerations are negligible \cite{mahony2008nonlinear,hua2013implementation}. Since accelerometers measure specific force rather than gravity, this approximation is only valid when translational accelerations remain small \cite{mahony2008nonlinear}; during sustained forward flight or aggressive maneuvers, tilt estimation based solely on accelerometer information becomes unreliable, necessitating additional measurements to restore attitude observability.

A common approach to overcoming this limitation is to augment inertial sensing with velocity-related information, giving rise to the class of velocity-aided attitude (VAA) estimation problems. Velocity measurements introduce a coupling between translational and rotational dynamics that can be exploited for attitude estimation. Numerous observer designs have been proposed, including linearized designs and invariant observer formulations \cite{2008_bonnabel_SymmetryPreservingObservers,2008_martin_InvariantObserverEarthVelocityAided}, as well as nonlinear observers with strong stability properties, in some cases achieving almost global asymptotic stability guarantees under suitable observability conditions \cite{2013_troni_PreliminaryExperimentalEvaluation,hua2010attitude,roberts2011attitude,berkane2017attitude,2016_hua_StabilityAnalysisVelocityaided,wang2021nonlinear,benallegue2023velocity,Pieter2023}. However, most VAA approaches assume the availability of full velocity measurements expressed either in body-fixed frame or in inertial frame. This assumption is often restrictive for small UAV platforms. In fixed-wing UAVs, air velocity is typically measured using Pitot tubes and expressed in body-fixed frame. Air velocity measurements are intrinsically coupled with the vehicle attitude through the translation dynamics and can therefore contribute to attitude observability. Early approaches combined Pitot and IMU data with aerodynamic models \cite{lie2014synthetic, borup2016nonlinear}, but their effectiveness depended on poorly known aerodynamic parameters. More recent works incorporate Pitot measurements directly into observer designs \cite{sun2019observability,johansen2015estimation}. In particular, Oliveira et al.~\cite{oliveira2024pitot} proposed a Riccati observer framework on $\mathrm{SO}(3) \times \mathbb{R}^3$ combining IMU data with a \textit{single-axis} Pitot tube to estimate the body-fixed frame air velocity and the vehicle tilt under suitable excitation conditions. Owing to the axial symmetry of the Pitot measurement, the yaw angle remains unobservable  in the absence of an inertial heading reference, restricting the estimation to a reduced attitude. Moreover, the nonlinear observer structure guarantees only local convergence.

The present paper addresses these limitations through a cascade observer design that separates aerodynamic-aided tilt estimation from full attitude estimation. In the first stage, we design a Riccati/Kalman-type observer that jointly estimates air velocity and the vehicle's tilt, defined as the gravity direction in the body-fixed frame. Under appropriate uniform excitation conditions, the resulting estimation error dynamics are shown to be globally uniformly exponentially stable. In the second stage, the estimated tilt is fused with magnetometer measurements in a nonlinear attitude filter on $\mathrm{SO}(3)$ based on the geometric filter introduced in \cite{tchonkeu2025barometer}. The magnetometer provides an independent directional reference that resolves the yaw ambiguity inherent to Pitot-based sensing, enabling estimation of the full attitude with almost global asymptotic stability guarantees. This approach extends existing Pitot-assisted estimation framework beyond local convergence results by providing a unified framework for full air velocity and attitude estimation with rigorous stability guarantees. Experimental results demonstrate the effectiveness of the proposed estimation scheme.  

\section{PRELIMINARY MATERIAL}
\label{sec:prelims}

We denote by \( \mathbb{R} \) and \( \mathbb{R}_+ \) the sets of real and nonnegative real numbers, respectively. The \( n \)-dimensional Euclidean space is denoted by \( \mathbb{R}^n \). The Euclidean inner product of two vectors \( x, y \in \mathbb{R}^n \) is defined as \( \langle x, y \rangle = x^\top y \). The associated Euclidean norm of a vector \( x \in \mathbb{R}^n \) is \( |x| = \sqrt{x^\top x} \). Furthermore, we denote by \( \mathbb{R}^{m \times n} \) the set of real \( m \times n \) matrices. The set of \( n \times n \) positive definite matrices is denoted by \( \mathcal{S}^+(n) \), and the identity matrix is denoted by \( I_n \in \mathbb{R}^{n \times n} \). Given two matrices \( A, B \in \mathbb{R}^{m \times n} \), the Euclidean matrix inner product is defined as \( \langle A, B \rangle = \mathrm{tr}(A^\top B) \), and the Frobenius norm of \( A \in \mathbb{R}^{n \times n} \) is given by \( \|A\| = \sqrt{\langle A, A \rangle} \). The unit sphere \( \mathbb{S}^2 := \{ \eta \in \mathbb{R}^3 \mid |\eta| = 1 \} \subset \mathbb{R}^3 \) denotes the set of unit 3D vectors and forms a smooth submanifold of \( \mathbb{R}^3 \).

The special orthogonal group of 3D rotations is denoted by
$
\mathrm{SO}(3) := \{ R \in \mathbb{R}^{3 \times 3} \mid RR^\top = R^\top R = I_3,\ \det(R) = 1 \}.
$
The Lie algebra of $\mathrm{SO}(3)$ is 
$
\mathfrak{so}(3) := \{\, \Omega \in \mathbb{R}^{3\times 3} \mid \Omega^\top = -\Omega \,\},
$
isomorphic to $\mathbb{R}^3$ via the skew-symmetric operator 
$(\cdot)^\times : \mathbb{R}^3 \to \mathfrak{so}(3)$, defined such that 
$
u \times v = u^\times v$ for all $u,v \in \mathbb{R}^3.$
The exponential map \( \exp : \mathfrak{so}(3) \rightarrow \mathrm{SO}(3) \) defines a local diffeomorphism from a neighborhood of \( 0 \in \mathfrak{so}(3) \) to a neighborhood of \( I_3 \in \mathrm{SO}(3) \). This enables the composition map \( \exp \circ (\cdot)^\times : \mathbb{R}^3 \rightarrow \mathrm{SO}(3) \), which is given by the following Rodrigues' formula~\cite{Ma2004rodriguesformula} :

\begin{equation}
\exp([\theta]^{^\times}) 
= I_3 - \frac{\sin(\|\theta\|)}{\|\theta\|}[\theta]^{^\times}
+ \frac{1-\cos(\|\theta\|)}{\|\theta\|^2}([\theta]^{^\times})^2.
\label{eq:rodriguesformula}
\end{equation}

Consider the linear time-varying (LTV) system given by
\begin{equation}
\begin{cases}
\dot{x} = A(t)x + B(t)u, \\
y = C(t)x,
\end{cases}
\label{eq:LTV_system}
\end{equation}
with state \( x \in \mathbb{R}^n \), input \( u \in \mathbb{R}^\ell \), and output \( y \in \mathbb{R}^m \). The matrix-valued functions \( A(t) \), \( B(t) \), and \( C(t) \) are assumed to be continuous and bounded. By definition from~\cite{Besancon2007}, the system ~\eqref{eq:LTV_system} or pair \( (A(t), C(t)) \) is \emph{uniformly observable} if there exist constants 
\( \delta, \mu > 0 \) such that, for all \( t \geq 0 \),
\begin{equation}
W(t, t+\delta) := \frac{1}{\delta} \int_t^{t+\delta} 
\Phi^\top(s, t) \, C^\top(s) \, C(s) \, \Phi(s, t) \, ds 
\geq \mu I_n, \label{eq:observability_gramian}
\end{equation}
where \( \Phi(s, t) \) is the state transition matrix such that
\begin{equation}
\frac{d}{dt} \Phi(s, t) = A(t)\Phi(s, t), 
\quad \Phi(t, t) = I_n, \quad \forall s \geq t. \label{eq:transition_matrix} 
\end{equation}
\( W(t, t+\delta) \) is called the \emph{observability Gramian} of the system.

\section{PROBLEM DESCRIPTION}
\begin{figure}[ht]
  \centering
  \begin{overpic}[width=0.7\linewidth]{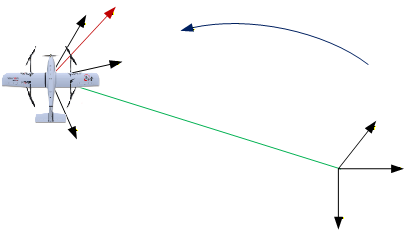}
    \put(8,36){\small \(\left\{\mathcal{B}\right\}\)}
    \put(18,55){\small \(e_1^{\mathcal{B}}\)}
    \put(30,42){\small \(e_2^{\mathcal{B}}\)}
    \put(18,19){\small \(e_3^{\mathcal{B}}\)}
    \put(76,20){\small \(\left\{\mathcal{I}\right\}\)}
    \put(90,28){\small \(e_1\)}
    \put(95,18){\small \(e_2\)}
    \put(85,2){\small \(e_3\)}
    \put(50,28){\small \(v\)}
    \put(66,52){\small \(R^\top\)}
    \put(28,54){\small \(V_a\)}
  \end{overpic}
  \caption{A fixed-wing UAV equipped with an IMU, a Pitot tube measuring the air velocity $V_a$.}
  \label{fig:veh_pitot_imu}
\end{figure}
Consider a vehicle equipped with an IMU, moving in 3D space. Let $\mathcal{I}$ denote a North-East-Down (NED) inertial frame, and let $\mathcal{B}$ be a body-fixed NED frame attached to the vehicle's center of mass and colocated with the IMU frame.
Let \( R \in \mathrm{SO}(3) \) denote the rotation matrix from the vehicle body-fixed frame \( \mathcal{B} \) to the inertial frame \( \mathcal{I} \), and let \( \boldsymbol{\omega} \in \mathbb{R}^3 \) and  \( \mathbf{a} \in \mathbb{R}^3 \) be the body angular velocity and the linear specific acceleration measured by the IMU, respectively. The linear velocity of the rigid body, expressed in the inertial frame \( \mathcal{I} \), is denoted by \( v \in \mathbb{R}^3 \). The gravitational acceleration is expressed in the inertial frame by \( \mathbf{g} = g \mathbf{e}_3 \in \mathbb{R}^3 \), where \( \mathbf{e}_3 = \begin{bmatrix}0  & 0 & 1 \end{bmatrix}^\top \in \mathbb{S}^2 \) denotes the standard gravity direction, and \( g \approx 9.81\, \text{m/s}^2 \) is the gravity constant. The vehicle's translational and rotational kinematics are given by
\begin{align*}
\dot{v} &= R\mathbf{a} + g \mathbf{e}_3, \\
\dot{R} &= R\boldsymbol{\omega}^\times.
\end{align*}
Let \( v_w \in \mathbb{R}^3 \) be the inertial wind velocity. The vehicle air velocity in inertial and body-fixed frames are \(v_a = v - v_w \) and \(V_a = R^{\top}v_a\), respectively. Assuming a bounded and slowly time-varying wind velocity throughout the flight ( \(\dot v_w \approx 0 \)), one obtains \(\dot v_a = \dot v\). Hence, the body-fixed frame air velocity and attitude dynamics become :
\begin{align}
\dot{V}_a &= -\boldsymbol{\omega}^\times V_a +gR^\top \mathbf{e}_3 + \mathbf{a}, \label{eq:air_velocity_dyn} \\
\dot{R} &= R\boldsymbol{\omega}^\times.
\label{eq:attitude_dyn}
\end{align}

The vehicle is equipped with $m$ calibrated Pitot probes. Each probe measures the scalar projection of the air velocity $V_a = \begin{bmatrix}V_{a,1} & V_{a,2} & V_{a,3}\end{bmatrix}^{\top}$ along a known body-fixed direction \(b_i \in \mathbb{S}^2\):
\begin{equation} 
y_{p,i} = b_i^\top V_a,
\end{equation}
Collectively, multi-probe Pitot system vector is modeled as
\begin{equation}
y_p = B^\top V_a, \quad B = [b_1\ \cdots b_m] \in \mathbb{R}^{3\times m}, \label{eq:pitot_measurement}
\end{equation}
For small fixed-wing UAVs, a single Ptiot probe aligned with the longitudinal body-axis is typically used, corresponding to $m = 1$ and $B=\mathbf{e}_1$, with \( \mathbf{e}_1 = \begin{bmatrix}1  & 0 & 0 \end{bmatrix}^\top \in \mathbb{S}^2 \).
This aerodynamic measurement provides only partial information on the air velocity. However, by exploiting the system dynamics (specifically the coupling between the air velocity and the attitude \refeq{eq:air_velocity_dyn}-\refeq{eq:attitude_dyn}), we will show that it can be sufficient to estimate both the full air velocity vector and the vehicle reduced attitude (tilt) even in the absence of GNSS velocity data.
In addition, we assume available a magnetometer that provides measurements of the Earth's magnetic field,
\begin{equation}
\mathbf{m}_{\mathcal{B}} = R^\top \mathbf{m}_{\mathcal{I}},
\label{eq:mag_measurement}
\end{equation}
where $\mathbf{m}_{\mathcal{I}} \in \mathbb{S}^2$ is the known magnetic field vector expressed in the inertial frame. 

The objective is to estimate the attitude $R \in \mathrm{SO}(3)$ and the air velocity $V_a \in \mathbb{R}^3 $ using airspeed and heading information, while exploiting IMU measurements through the coupled dynamics~\eqref{eq:air_velocity_dyn} and~\eqref{eq:attitude_dyn}.

\section{PROPOSED OBSERVER DESIGN}

This section presents a two-stage observer architecture for attitude and air velocity estimation. The design exploits a reduced-order Riccati observer for air velocity dynamics and tilt estimation, followed by a nonlinear observer on \( \mathrm{SO}(3) \) to estimate full orientation.  The overall architecture is illustrated in figure~\ref{fig:observer_block_diagram}.

\begin{figure}[ht]
  \centering
  \begin{overpic}[width=0.7\linewidth]{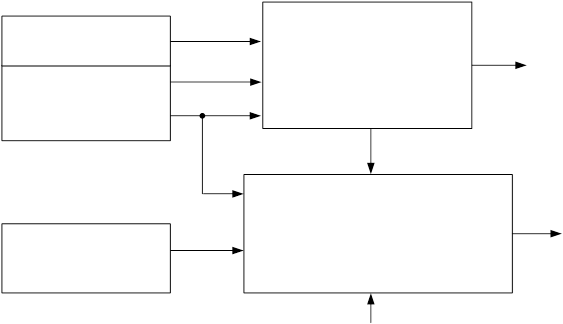}
    \put(5,48){\small Pitot tube}
    \put(10,40){\small IMU}
    \put(2,36){\small (acc + gyro)}
    \put(0,10){{\small Magnetometer}}
    \put(34,52){\small \(V_a\)}
    \put(34,44){\small \(\mathbf{a}\)}
    \put(34,38){\small \(\boldsymbol{\omega}\)}
    \put(32,14){\small \(\mathbf{m}_{\mathcal{B}}\)}
    \put(47,46){\small Riccati Observer}
    \put(61,40){\small~\eqref{eq:riccati_observer}} 
    \put(86,49){\normalsize \(\hat{V}_a\)}
    \put(68,28){\normalsize \(\hat{z}\)}
    \put(46,19){\small Attitude Observer}
    \put(61,12){\small~\eqref{eq:attitude_observer}}
    \put(92,18){\normalsize \(\hat{R}\)}
    \put(56,1){\small \(\mathbf{m}_{\mathcal{I}}\)}
  \end{overpic}
  \caption{Two-stage observer architecture: a Riccati observer fuses IMU and Pitot tube data to estimate tilt and air velocity, followed by an \( \mathrm{SO}(3) \) observer combining tilt, gyro, and magnetometer to estimate \( \hat{R} \).}
  \label{fig:observer_block_diagram}
\end{figure}

\subsection{Tilt and Air Velocity Estimation}

We define the tilt (reduced attitude) as the gravity direction expressed in the body frame~\cite{benallegue2023velocity}:
\begin{equation}
z := R^\top \mathbf{e}_3 \in \mathbb{S}^2,
\end{equation}
which evolves according to
\begin{equation}
\dot{z} = -\boldsymbol{\omega}^\times z.
\label{eq:z_dot}
\end{equation}
Combining \eqref{eq:air_velocity_dyn} and \eqref{eq:z_dot}, yields the system dynamics
\begin{equation}
\begin{cases}
\dot{V}_a = -\boldsymbol{\omega}^\times V_a +gz + \mathbf{a},\\
\dot{z} = -\boldsymbol{\omega}^\times z. \label{eq:system_dynamics}
\end{cases}
\end{equation}
By defining the augmented state $z \in \mathbb{R}^3$, the system dynamics~\eqref{eq:system_dynamics} and the Pitot tube measurement~\eqref{eq:pitot_measurement} can be rewritten as a linear time-varying (LTV) system.
Letting $x := \begin{bmatrix} V_a^\top & z^\top \end{bmatrix}^\top \in \mathbb{R}^6$, we obtain
\begin{equation}
\begin{cases}
\dot{x} = A(t)x + B_u\mathbf{a}, \\
y_p = Cx,
\end{cases}
\label{eq:ltv}
\end{equation}
with
\begin{equation}
A(t) = 
\begin{bmatrix}
-\boldsymbol{\omega}^\times  & gI_3 \\
\mathbf{0}_{3 \times 3} & -\boldsymbol{\omega}^\times \\
\end{bmatrix}= A_w(t) + \bar{A}, \quad 
B_u = \begin{bmatrix} I_3 \\ \mathbf{0}_{3 \times 3} \end{bmatrix}, \label{eq:state_command_matrix}
\end{equation}
and the output matrix :
\begin{equation}
C = \begin{bmatrix} B^{\top} & \mathbf{0}_{m \times 3} \end{bmatrix} = B^{\top}\bar{C} \in \mathbb{R}^{m\times 6}, \label{eq:output_matrix}
\end{equation}
with
\[
A_w(t) =\begin{bmatrix}-\boldsymbol{\omega}^\times & \mathbf{0}_{3 \times 3} \\
\mathbf{0}_{3 \times 3} & -\boldsymbol{\omega}^\times
\end{bmatrix}, \quad \bar{A} = \begin{bmatrix} \mathbf{0}_{3 \times 3} & gI_3 \\ \mathbf{0}_{3 \times 3}  & \mathbf{0}_{3 \times 3}  \end{bmatrix},
\] and \[\bar{C} = \begin{bmatrix} I_3 & \mathbf{0}_{3 \times 3} \end{bmatrix} \in \mathbb{R}^{3\times 6}.\]
Using the measured angular velocity, the state vector $x$ is estimated without explicitly reconstructing the attitude $R$, by decoupling the state dynamics from the attitude estimation and designing a Riccati observer of the Kalman–Bucy type. 
The observer is given by
\begin{equation}
\dot{\hat{x}} = A(t) \hat{x} + B_u\mathbf{a}  + K(t)\left(y_p - C \hat{x} \right), \label{eq:riccati_observer}
\end{equation}
where $K(t) = P(t) C^\top Q^{-1}$ and and $P(t)$ solves the continuous-time Riccati equation (CRE): \(\dot{P} = A(t) P + P A^\top(t) - P C^\top Q^{-1} C P + S\), with \(P(0) > 0\). The matrices $Q>0$ and $S>0$ are symmetric positive definite, where $Q$ typically encodes the Pitot tube measurement covariance,  
while $S$ plays the role of the process noise covariance. The stability and convergence properties of this observer are fundamentally linked to the well-posedness of the CRE, and more precisely, to the UO of the LTV system \eqref{eq:ltv}. The UO property guarantees the existence of a unique, bounded, and positive-definite solution \( P(t) \) to the CRE for all \( t \ge 0 \). As a consequence, the estimation error \(\tilde{x} := \hat x - x\) between the observer's state and the actual state decays exponentially to zero with the rate of convergence tuned by \(Q\) and \(S\) \cite{Hamel2017PositionMeasurements}.

To analyze uniform observability, we proceed by deriving the observability Gramian. In view of the structure of the LTV system \eqref{eq:ltv}-\eqref{eq:state_command_matrix}, we consider the change of variable $\bar{x}(t) = T(t)x(t),$ with $T(t) = \mathrm{diag}\left (\phi(t),\phi(t)\right)$ where $\phi(t) \in \mathrm{SO}(3)$ and inherits the dynamics of $R(t)$. That is:
\begin{align}\label{eq:phi}
    \dot \phi &=\phi \boldsymbol{\omega}^\times ,\qquad \phi(0)=\phi_0\in\mathrm{SO}(3),
\end{align} 
Differentiating yields \[\dot {\bar{x}} := \dot{T}(t)x(t)+T(t)\big(\left(A_w(t) +\bar{A}\right)x(t)+B_u\mathbf{a}\big).\]
In view of \eqref{eq:attitude_dyn}, $\dot{T}(t) := T(t)\mathrm{diag}(\boldsymbol{\omega}^\times,\boldsymbol{\omega}^\times) = -T(t)A_w(t)$. Then, 
\[
\dot{\bar{x}} =T(t)\bar{A}x(t)+T(t)B_u\mathbf{a} = T(t)\bar{A}T(t)^{\top}\bar{x}(t)+T(t)B_u\mathbf{a}.
\] 
Since $\phi \phi^{\top}=I_3$, it follows that $T(t)\bar{A}T(t)^{\top} = \bar{A},$ and hence: 
\begin{equation}
\dot{\bar{x}}(t)  = \bar{A}\bar{x}(t)+T(t)B_u\mathbf{a}, \mbox{ and } y=C\phi(t)^\top\bar{x}. \label{eq:bar_x}
\end{equation}
Since the input $\mathbf{a}(t)$ does not affect the observability of the LTV system, we assume  $\mathbf{a}(t) = 0$ to simplify the analysis. In this case one deduces that:  $\bar{x}(t) = \exp{\left(\bar{A}(t-s)\right)}\bar{x}(s)$ for any $0\le s \le t$. Therefore, $x(t) = T(t)^{\top}\exp \left(\bar{A}(t-s)\right) T(s) x(s)$,
which implies that the state transition matrix is given by
\begin{equation}
\begin{aligned}
\Phi(t,s)&:=T^{\top}(t)\exp \left(\bar{A}(t-s)\right)T(s),\\
&= T(t)^{\top}\bar{\Phi}(t,s)T(s), \label{eq:state_trans_matrix}
\end{aligned}
\end{equation}
with $\bar{\Phi}(t,s) = \exp \left(\bar{A}(t-s)\right)$ and $T(s) = I_6.$
From \eqref{eq:state_trans_matrix}, we compute $C(s)\Phi(t,s)$ as follows
\begin{equation}
C(s)\Phi(t,s) = C(s)T^\top (t)\bar{\Phi}(t,s)T(s). \label{eq:C_Phi}
\end{equation}
In view of \eqref{eq:C_Phi} and \eqref{eq:observability_gramian}, the observability Gramian yields
\begin{equation}
\begin{aligned}
&W(t,t+\delta) =\\
&\frac{1}{\delta}\int_{t}^{t + \delta}T(t)^{\top}\bar{\Phi}^{\top}(s,t)T(s)C^{\top}(s)C(s)T(s)^{\top}\bar{\Phi}(s,t)T(t)ds.
\end{aligned}
\label{eq:gramian}
\end{equation}
Since $C(s) = B^{\top}\bar{C}$, we have

\begin{equation}
W(t,t+\delta) =
T(t)^{\top}\left(\frac{1}{\delta}\int_{t}^{t + \delta}\bar{\Phi}^{\top}(s,t)\Gamma(s)\bar{\Phi}(s,t)ds\right)T(t),
\label{eq:gramian_1}
\end{equation}
with $\Gamma(s) = T(s)\bar{C}^{\top}BB^{\top}\bar{C}T(s)^{\top}$. Moreover, we have
\begin{equation}
\bar{C}T(s)^{\top} = \bar{C}\begin{bmatrix}\phi(s)^{\top}  & \mathbf{0}_{3 \times 3} \\\mathbf{0}_{3 \times 3}&\phi(s)^{\top} \end{bmatrix} = \phi(s)^{\top}\bar{C}. 
\end{equation}
Thus the observability Gramian in \eqref{eq:gramian_1} becomes
\begin{equation}
W(t,t+\delta) =T(t)^{\top}\bar{W}(t,t+\delta)T(t)^\top 
\end{equation}
where 
\begin{equation}
\bar{W}(t,t+\delta)=\frac{1}{\delta}\int_{t}^{t + \delta}\bar{\Phi}^{\top}(s,t)\bar{C}^{\top}\Sigma(s)\bar{C}\bar{\Phi}(s,t)ds,
\label{eq:gramian_2}
\end{equation}
is the observability Gramian of the transformed system \eqref{eq:bar_x},
with $\Sigma(s) = \Pi(s)\Pi(s)^{\top}$, where $\Pi(s)= \phi(s)B.$

\begin{lemma} \label{Lemma1}
Assume that the angular velocity \(\boldsymbol{\omega}(t)\) is continuous and bounded. If the matrix \(\Sigma(t)\) is \emph{persistently exciting} (PE) in the sense that there exist  \( \bar{\delta}, \bar{\mu} > 0 \) such that \( \forall t \ge 0 \),
\begin{equation}
\frac{1}{\bar\delta}\int_t^{t+\bar\delta}\Sigma(s)\,ds
\ \ge\ \bar\mu I_3,
\label{eq:PE_Condition}
\end{equation}
then the pair $(A(t), C)$ is \emph{uniformly observable}, and the equilibrium \(\tilde x = 0_{6\times1}\) is globally uniformly exponentially stable (GES).
\end{lemma}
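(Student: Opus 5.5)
Since $T(t)$ is orthogonal, $W(t,t+\delta)$ and $\bar W(t,t+\delta)$ are congruent and have the same spectrum. It therefore suffices to show that $\bar W(t,t+\delta)\ge \mu I_6$ uniformly in $t$ for some $\delta,\mu>0$. Global uniform exponential stability of $\tilde x=0$ then follows from the standard Riccati/Kalman–Bucy result recalled above \cite{Hamel2017PositionMeasurements}. Under uniform observability of $(A(t),C)$, with $A(t)$ bounded because $\boldsymbol{\omega}$ is bounded and $S,Q>0$, the CRE solution $P(t)$ is bounded above and below by positive definite matrices. Then $V=\tilde x^\top P^{-1}\tilde x$ is a strict Lyapunov function for $\dot{\tilde x}=(A-PC^\top Q^{-1}C)\tilde x$.

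\textbf{Computing $\bar W$.} Since $\bar A^2=0$, we have $\bar\Phi(s,t)=I_6+(s-t)\bar A$. Hence
\[
\bar C\bar\Phi(s,t)=\begin{bmatrix} I_3 & g(s-t)I_3\end{bmatrix}.
\]
For $x=[u^\top\ w^\top]^\top$ this gives
\[
x^\top \bar W x=\frac1\delta\int_t^{t+\delta}\big(u+g(s-t)w\big)^\top\Sigma(s)\big(u+g(s-t)w\big)\,ds .
\]
This is the observability Gramian of a double-integrator-type system with time-varying output weight $\Sigma(s)\ge0$.

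\textbf{Positivity argument.} I would argue by contradiction, in the standard style for PE-type lemmas. Note that $\Sigma(s)=\phi(s)BB^\top\phi(s)^\top$ is bounded, with $\|\Sigma\|\le\|B\|^2$, and Lipschitz in $s$, since $\dot\phi=\phi\boldsymbol{\omega}^\times$ with $\boldsymbol{\omega}$ bounded. Take $\delta=2\bar\delta$ (or $N\bar\delta$). Suppose there are sequences $t_k$ and unit vectors $x_k=(u_k,w_k)$ with $x_k^\top\bar W(t_k,t_k+\delta)x_k\to0$. Shift time to $[0,\delta]$ and extract convergent subsequences: $x_k\to x^*$, and $\Sigma(t_k+\cdot)\to\Sigma^*(\cdot)$ uniformly by Arzelà–Ascoli. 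The limit $\Sigma^*$ still satisfies the PE bound on each subinterval of length $\bar\delta$, and
\[
\int_0^\delta (u^*+g\tau w^*)^\top\Sigma^*(\tau)(u^*+g\tau w^*)\,d\tau=0 .
\]
Thus $\Sigma^*(\tau)^{1/2}(u^*+g\tau w^*)=0$ for all $\tau\in[0,\delta]$.

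\textbf{Main obstacle.} The crux is turning this pointwise vanishing along the moving vector $u^*+g\tau w^*$ into a contradiction with PE. On $[0,\bar\delta]$ I would split $u^*+g\tau w^* = (u^*+g\tau_0 w^*)+g(\tau-\tau_0)w^*$. If $w^*\neq0$, then on short subintervals the vector is nearly constant, while PE forces $\int\Sigma^*$ to be positive definite over each length-$\bar\delta$ window. So I would do the following.
\begin{enumerate}
\item Show $w^*=0$. Compare two windows, $[0,\bar\delta]$ and $[\bar\delta,2\bar\delta]$. If $w^*\ne0$, write $v(\tau)=u^*+g\tau w^*$, so $v(\tau)^\top\Sigma^* v(\tau)=0$. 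Integrating against the weights $1,\tau,\tau^2$ gives the three conditions $u^{*\top}\Sigma u^*$, $u^{*\top}\Sigma w^*$, $w^{*\top}\Sigma w^*$ weighted by polynomials in $\tau$. The clean route is to use $\Sigma^*\ge0$: the vanishing of $v^\top\Sigma^* v$ implies $\Sigma^* v=0$, hence $\Sigma^*(\tau)(u^*+g\tau w^*)=0$.
\item Differentiating is not available since $\Sigma^*$ is only Lipschitz, so instead integrate $\Sigma^*(\tau)(u^*+g\tau w^*)=0$ against suitable test weights. This should yield $\int_I\Sigma^*\,d\tau\,(u^*+g\tau_I w^*)\approx0$ on short windows $I$ around $\tau_I$. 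Choosing windows of length $\bar\delta$ centered at two distinct points then forces both $u^*$ and $w^*$ to be zero.
\end{enumerate}
The delicate point is this last step: PE is only stated on windows of length $\bar\delta$, not arbitrarily short ones, so the "nearly constant vector" approximation must be made quantitative using $|w^*|\le1$ and a choice $\delta\gg\bar\delta$. If that becomes messy, I would fall back on the known lemma that PE of $\Sigma$ implies uniform observability of the pair $(\bar A,\Pi(t)^\top\bar C)$ for a nilpotent chain $\bar A$. This can be justified by a Gronwall/perturbation argument: on a window of length $\bar\delta\epsilon$, the term $g(s-t)w$ is a small perturbation relative to the PE level $\bar\mu$.
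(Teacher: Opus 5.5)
Your reduction ($W(t,t+\delta)=T(t)^\top\bar W(t,t+\delta)\,T(t)$ with $T$ orthogonal, $\bar\Phi(s,t)=I_6+(s-t)\bar A$) and the GES conclusion via the bounded, uniformly positive CRE solution match the paper. The paper obtains $\bar W(t,t+\delta)\ge\mu I_6$ in one line from Lemma~2.7 of \cite{Hamel2017PositionMeasurements} ($(\bar A,\bar C)$ Kalman observable, $\bar A$ with real spectrum, $\Sigma$ PE). That is exactly your fallback, so with that citation your proof is the paper's.

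Your self-contained treatment of this step is not closed, however, and the levers you name do not close it. After compactness you correctly reach $\Sigma^*(\tau)v(\tau)\equiv0$, with $v(\tau)=u^*+g\tau w^*$. But on a window $I$ of length $\bar\delta$ centred at $\tau_I$, the remainder $g\big(\int_I(\tau-\tau_I)\Sigma^*(\tau)\,d\tau\big)w^*$ can be as large as $\tfrac14 g\|B\|^2\bar\delta^2|w^*|$. This is not small compared with the PE level $\bar\mu\bar\delta$, and neither $|w^*|\le1$ nor any choice of $\delta$ makes it small, so ``$\approx0$'' is unjustified. The Gronwall/perturbation justification on windows of length $\epsilon\bar\delta$ fails for another reason: the hypothesis gives no lower bound on windows shorter than $\bar\delta$. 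With one Pitot, $\Sigma$ is rank one pointwise, so its integral over a short window is nearly rank one.

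Adjacent windows ($\delta=2\bar\delta$) are also genuinely insufficient. For $B=[\mathbf{e}_1\ \mathbf{e}_2]$ one has $\Sigma=I_3-nn^\top$ with $n=\phi\mathbf{e}_3$. Let $n(s)$ point along $u+g(s-t)w$ on $[t,t+2\bar\delta]$, e.g.\ $u=g\bar\delta(\mathbf{e}_1-\mathbf{e}_2)$, $w=\mathbf{e}_2$. Then $\bar W(t,t+2\bar\delta)$ is singular while PE still holds with some small $\bar\mu$.

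The missing idea is to keep the factor $|w^*|$ rather than try to make the remainder small. Integrate $\Sigma^*v\equiv0$ over $I$ and use PE and $\|\Sigma^*\|\le\|B\|^2$:
\[
\begin{aligned}
\bar\mu\bar\delta\,|u^*+g\tau_I w^*| &\le \Big|\Big(\int_I\Sigma^*\,d\tau\Big)(u^*+g\tau_I w^*)\Big|\\
&= g\,\Big|\Big(\int_I(\tau-\tau_I)\Sigma^*(\tau)\,d\tau\Big)w^*\Big|\le\tfrac14\,g\|B\|^2\bar\delta^2\,|w^*|.
\end{aligned}
\]
This bound is independent of $\tau_I$, whereas the vectors $u^*+g\tau_i w^*$ at two centres differ in norm by $g|\tau_1-\tau_2|\,|w^*|$. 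Hence $g|\tau_1-\tau_2|\,|w^*|\le g\|B\|^2\bar\delta\,|w^*|/(2\bar\mu)$. Taking $\delta>\bar\delta\big(1+\|B\|^2/(2\bar\mu)\big)$ and the two extreme windows forces $w^*=0$. The same inequality then gives $u^*=0$, contradicting $|x^*|=1$.

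You can even drop Arzel\`a--Ascoli: bound the residual $|\int_I\Sigma v|\le\|B\|\big(\bar\delta\int_I v^\top\Sigma v\big)^{1/2}$ by Cauchy--Schwarz, and the same computation applied directly to $\Sigma$ gives explicit $\delta,\mu$. So repaired, your route is a self-contained alternative to the citation: it exploits $\bar A^2=0$ and produces an explicit admissible $\delta$. The cited lemma is instead a black box covering any Kalman-observable constant pair with real spectrum.
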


\begin{proof}
To establish uniform observability of the system~\eqref{eq:ltv}, it suffices to show that the transformed system is uniformly observable. We begin by examining the associated constant pair $(\bar{A}, \bar{C})$. From the structure of these matrices, it follows that $(\bar{A}, \bar{C})$ is Kalman observable and $\bar{A}$ has real eigenvalues. Hence, under the PE assumption in \eqref{eq:PE_Condition}, it follows from \cite[Lemma~2.7]{Hamel2017PositionMeasurements} that there  \( \delta, \mu > 0 \) such that
\[\bar{W}(t,t+\delta) \ge \mu I_6.\] This implies in view of \eqref{eq:gramian_2} and since $\|\bar{W}(t,t+\delta)\|=\|W(t,t+\delta)\|$ that \(W(t, t + \delta) \ge \mu I_6,~ \forall t \geq 0.\) Therefore, the pair \((A(t),C)\) is uniformly observable. This in turn guarantees the global exponential stability of the equilibrium \(\tilde{x} = \boldsymbol{0}_{6\times 1}\)(See~\cite{Hamel2017PositionMeasurements}).
\end{proof}
Note that, under the NED body-axis convention, the air velocity can be parametrized as 
\(V_a = \|V_a\|\begin{bmatrix}c_\alpha c_\beta ~~ c_\alpha s_\beta ~~ s_\alpha \end{bmatrix}^{\top}\)(see \cite{oliveira2024pitot}), where $\|V_a\|$ denotes the airspeed magnitude, and $\alpha$ and $\beta$ represent the angle of attack and sideslip angle, respectively. Accordingly, when the UO conditions of Lemma \ref{Lemma1} are fulfilled, the airspeed estimate is given by $\|\hat V_a\|$, while the angles are computed as \cite{oliveira2024pitot}
\begin{equation}
\hat \alpha = \arcsin\left(\frac{\hat{V}_{a,3}}{\|\hat V_a\|}\right),\quad \hat \beta = \arctan\left(\frac{\hat V_{a,2}}{\hat{V}_{a,1}}\right). \label{eq:attack_sideslip}
\end{equation}

For the standard single-Pitot configuration, the Pitot probe is aligned with the longitudinal body-axis, i.e. in view of \eqref{eq:pitot_measurement}, the Pitot measurement is given by 
\begin{equation}
y_p =\mathbf{e}_1^{\top}V_a = V_{a,1}, \label{eq:pitot_meas_conf_2}
\end{equation} 
With $m=1$ and $B = \mathbf{e}_1$, Lemma~\ref{Lemma1} requires sufficient excitation of the direction $\Pi(t) = \phi(t)\mathbf{e}_1$. In practice, $\phi(t)$ corresponds to the vehicle orientation, that is, $\phi(t) = R(t)$. Such excitation is typically induced by combined yaw and pitch motions, ensuring uniform observability of the full air velocity and the gravity direction. However, pitching a fixed-wing aircraft is more demanding and generally involves a time-varying angle of attack, which can be problematic. This structural limitation motivates the introduction of additional information to reduce the remaining rotational ambiguity. In particular, when physically justified, a zero-sideslip condition ($\beta = 0$) can be enforced and interpreted as a pseudo-measurement.

 In steady, coordinated flight, directional stability and coordinated control align the body with the relative wind, so that the mean sideslip angle is driven to zero. Under the parametrization in \eqref{eq:attack_sideslip}, this is equivalent to constraining $V_{a,2}=0$. Within the proposed framework, the pseudo-measurement is then incorporated as an additional output of the Riccati observer \eqref{eq:riccati_observer}, by augmenting the Pitot tube measurement as 
\begin{equation}
y_p = \begin{bmatrix}V_{a,1}& V_{a,2}\end{bmatrix}^{\top} =  \begin{bmatrix}V_{a,1}& \mathbf{0}\end{bmatrix}^{\top} ,\label{eq:pitot_meas_conf_1}
\end{equation}
which in view of \eqref{eq:pitot_measurement} corresponds to 
$m=2$ and $B = \begin{bmatrix} \mathbf{e}_1 & \mathbf{e}_2 \end{bmatrix}$.
With this output augmentation, the PE condition in \eqref{eq:PE_Condition} applies to the two-dimensional subspace spanned by the directions \(R(t)^\top\mathbf{e}_1\) and \(R(t)^\top\mathbf{e}_2\), requiring a sustained \textit{yaw motion only} to ensure persistent excitation, thereby relaxing the excitation requirement compared to the single-axis Pitot configuration. Moreover, this approach reduces the symmetry of the output map and strengthens the UO properties, since the lateral air velocity component is no longer free to evolve independently.

\subsection{Full Attitude Estimation on $ \mathrm{SO}(3)$}
Following the estimation of tilt vector \( \hat{z} \) via the Riccati observer, the full attitude estimation requires an additional known direction to resolve the heading ambiguity. To this end, we incorporate magnetometer measurements \(\mathbf{m}_{\mathcal{B}} \in \mathbb{S}^2 \) defined in \eqref{eq:mag_measurement}. Let \( \hat{R} \in \mathrm{SO}(3) \) denote the estimate of the true attitude \( R \). The attitude observer proposed in \cite{tchonkeu2025barometer} is given by
\begin{equation}
\dot{\hat{R}} = \hat{R} \boldsymbol{\omega}^\times - \sigma_R^\times \hat{R}, \quad \hat{R}(0) \in \mathrm{SO}(3),
\label{eq:attitude_observer}
\end{equation}
where the correction term \( \sigma_R \in \mathbb{R}^3 \) is defined as
\begin{equation}
\sigma_R = k_z (\mathbf{e}_3 \times \hat{R} \hat{z}) + k_m (\bar{\mathbf{m}}_{\mathcal{I}} \times \hat{R} \bar{\mathbf{m}}_{\mathcal{B}}),
\label{eq:attitude_correction}
\end{equation}
with \( \bar{\mathbf{m}}_{\mathcal{I}} = \bar\pi_{\mathbf{e}_3} \mathbf{m}_{\mathcal{I}} \), \( \bar{\mathbf{m}}_{\mathcal{B}} = \bar{\pi}_{\hat z} \mathbf{m}_{\mathcal{B}} \), \( k_z > 0 \), and \( k_m \geq 0 \); where for a vector $u \in \mathbb{R}^3,$ $\bar{\pi}_u := |u|^2 I_3 - uu^\top$ is a regularized projection operator, well defined even when $u=0_{3 \times 1}$.
The use of $\bar{\pi}_{\hat z}$ prevents singularities if $\hat z$ vanishes, and the projected magnetometer vectors ensure that yaw estimation is decoupled from roll and pitch; see \cite{hua2013implementation}. 

\begin{theorem}\label{Theorem1}
Consider the Riccati observer~\eqref{eq:riccati_observer} and the attitude observer~\eqref{eq:attitude_observer} with correction~\eqref{eq:attitude_correction}. Assume the pair \((A(t), C)\) is uniformly observable according to Lemma~\ref{Lemma1}, then the estimation error \(\tilde{R}=\hat RR^\top, \tilde{x}=x-\hat x\) converge to the set of equilibria \(\mathcal{E} = \mathcal{E}_s \cup \mathcal{E}_u\), where \(\mathcal{E}_s = \{(I_3,0_{6\times1})\}\) and \(\mathcal{E}_u = \left\{(U\Lambda U^\top, 0)\,\middle|\, \Lambda = \mathrm{diag}(1, -1, -1),\, U \in \mathrm{SO}(3) \right\}.\) Moreover, the set \(\mathcal{E}_u\) is unstable, and the singleton \(\mathcal{E}_s\) is almost-globally asymptotically stable (AGAS).
\end{theorem}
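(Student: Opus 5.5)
We will treat the closed loop as a cascade. The driving subsystem is the Riccati error $\tilde x$. The driven subsystem is the attitude error $\tilde R=\hat R R^\top$.

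\textbf{Step 1: error dynamics.} A direct computation from \eqref{eq:attitude_dyn} and \eqref{eq:attitude_observer} gives
\[
\dot{\tilde R}=\hat R\boldsymbol{\omega}^\times R^\top-\sigma_R^\times\tilde R-\hat R\boldsymbol{\omega}^\times R^\top=-\sigma_R^\times\tilde R .
\]
We write $\hat z=z-\tilde z$, where $\tilde z$ is the lower block of $\tilde x$, and use $\hat R z=\tilde R\mathbf{e}_3$. Then
\[
\mathbf{e}_3\times\hat R\hat z=\mathbf{e}_3\times\tilde R\mathbf{e}_3-\mathbf{e}_3\times\hat R\tilde z .
\]
Similarly $\hat R\bar\pi_{\hat z}\mathbf{m}_{\mathcal B}=\bar\pi_{\hat R\hat z}\tilde R\mathbf{m}_{\mathcal I}$, because $\hat R\bar\pi_{u}\hat R^\top=\bar\pi_{\hat R u}$ and $\hat R\mathbf{m}_{\mathcal B}=\tilde R\mathbf{m}_{\mathcal I}$. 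This term differs from $\bar\pi_{\tilde R\mathbf{e}_3}\tilde R\mathbf{m}_{\mathcal I}$ by a term that is Lipschitz in $\tilde z$. Hence
\[
\sigma_R=\sigma_0(\tilde R)+\Delta(t,\tilde R,\tilde x),\qquad |\Delta|\le c\,|\tilde x|,
\]
where $c$ is uniform on the invariant set. Boundedness of $\hat z$ follows from the GES of $\tilde x$ and $|z|=1$. The nominal term is
\[
\sigma_0=k_z\,\mathbf{e}_3\times\tilde R\mathbf{e}_3+k_m\,\bar{\mathbf{m}}_{\mathcal I}\times\bar\pi_{\tilde R\mathbf{e}_3}\tilde R\mathbf{m}_{\mathcal I}.
\]

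\textbf{Step 2: the unperturbed system $\dot{\tilde R}=-\sigma_0(\tilde R)^\times\tilde R$.} This is the autonomous decoupled tilt/yaw observer of \cite{tchonkeu2025barometer}, see also \cite{hua2013implementation}. We will take the potential
\[
\mathcal L(\tilde R)=\tfrac{k_z}{2}|\mathbf{e}_3-\tilde R\mathbf{e}_3|^2+(\text{a projected magnetometer potential}),
\]
or invoke the result of \cite{tchonkeu2025barometer} directly, and obtain $\dot{\mathcal L}=-|\sigma_0|^2\le 0$ up to a positive weighting. LaSalle then shows that trajectories converge to the critical set $\{\sigma_0=0\}$. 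This set consists of $I_3$ and the rotations by $\pi$ about a horizontal axis, which are the elements $U\Lambda U^\top$ in $\mathcal E_u$. Linearizing at each point of $\mathcal E_u$ shows it has an unstable direction, so its stable manifold has zero measure. Hence $I_3$ is AGAS for the nominal system and locally exponentially stable.

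\textbf{Step 3: cascade.} By Lemma~\ref{Lemma1}, $\tilde x\to0$ exponentially and globally. $\mathrm{SO}(3)$ is compact, so $\tilde R$ trajectories are trivially bounded. The perturbation $\Delta$ decays exponentially, hence is integrable. We then apply a cascade or asymptotically-autonomous argument: the $\omega$-limit sets of $(\tilde R,\tilde x)$ are invariant for the nominal flow. Alternatively, differentiate $\mathcal L$ along the perturbed flow to get
\[
\dot{\mathcal L}\le-|\sigma_0|^2+c'|\tilde x|,
\]
where the last term is integrable; Barbalat's lemma then gives $\sigma_0(\tilde R(t))\to0$. Either way, $(\tilde R,\tilde x)\to\mathcal E$.

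\textbf{Step 4: instability of $\mathcal E_u$ and almost global stability.} The set $\mathcal E_u$ is unstable because the nominal linearization has a positive eigenvalue and this persists in the product system, where $\tilde x$ is exponentially stable. $\mathcal E_s$ is locally exponentially stable, since the exponentially decaying perturbation does not destroy local exponential stability. Showing that the attraction basin of $\mathcal E_u$ in the full cascade has measure zero is the step I expect to be hardest. The perturbation is time-varying, so the stable manifold theorem must be applied to a nonautonomous cascade. I would first try the known result on cascades with an AGAS driven system and a GES driver, with a stable-manifold argument for hyperbolic unstable equilibria under exponentially vanishing inputs (Angeli-type / Berkane et al.\ arguments). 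This yields that $\mathcal E_s$ is AGAS.
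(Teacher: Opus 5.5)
Your cascade structure (GES Riccati error as the driver, a Lyapunov/Barbalat argument for $\tilde R$, instability of the undesired equilibria, then a stable-manifold argument) is the standard route. The paper gives no argument of its own; it only defers to the proof of Theorem~1 in \cite{tchonkeu2025barometer}. Steps~1 and~3 are fine, but Step~2 misidentifies the equilibria.

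Since $\bar\pi_{\tilde R\mathbf{e}_3}=\tilde R\,\bar\pi_{\mathbf{e}_3}\tilde R^\top$, your nominal correction is exactly $\sigma_0=k_z\,\mathbf{e}_3\times\tilde R\mathbf{e}_3+k_m\,\bar{\mathbf{m}}_{\mathcal{I}}\times\tilde R\bar{\mathbf{m}}_{\mathcal{I}}$. The nominal flow is therefore the gradient flow of $\mathcal{L}=\mathrm{tr}\big(M(I_3-\tilde R)\big)$ with $M=k_z\mathbf{e}_3\mathbf{e}_3^\top+k_m\bar{\mathbf{m}}_{\mathcal{I}}\bar{\mathbf{m}}_{\mathcal{I}}^\top$, and $\sigma_0=0$ iff $\tilde RM$ is symmetric. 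Assume $k_m>0$, $\bar{\mathbf{m}}_{\mathcal{I}}\neq0$ and $\lambda:=k_m|\bar{\mathbf{m}}_{\mathcal{I}}|^2\neq k_z$. Then the equilibria are $I_3$ and the $\pi$-rotations about $\mathbf{e}_3$, $\bar{\mathbf{m}}_{\mathcal{I}}$ and $\mathbf{e}_3\times\bar{\mathbf{m}}_{\mathcal{I}}$, and nothing else. So not every horizontal axis qualifies, and you missed the vertical one (correct tilt, heading flipped by $\pi$), where both terms of $\sigma_0$ vanish trivially.

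Convergence to $\mathcal{E}$ still holds, because every $\pi$-rotation lies in $\mathcal{E}_u$. However, ``linearizing at each point of $\mathcal{E}_u$'' is meaningless, since most of $\mathcal{E}_u$ consists of non-equilibria. Instability must be checked at the three points above. There the Hessian of $\mathcal{L}$ has eigenvalues $\{-\lambda,k_z,k_z-\lambda\}$, $\{-k_z,\lambda,\lambda-k_z\}$ and $\{-k_z,-\lambda,-k_z-\lambda\}$, respectively.

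This also exposes assumptions you use without stating them. You need $k_m>0$ and $\bar{\mathbf{m}}_{\mathcal{I}}\neq0$: the statement allows $k_m\ge0$, but for $k_m=0$ every rotation about $\mathbf{e}_3$ is an equilibrium and AGAS fails. You also need $k_z\neq\lambda$ for isolated hyperbolic saddles. Otherwise the two saddles merge into a circle of Morse--Bott critical points, and you need a center-stable-manifold version of the argument.

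Step~4 is where the real content lies, and naming ``Angeli-type'' results is not yet a proof. A direct argument runs as follows. At each undesired $(\tilde R^*,0)$, the linearization of the full time-varying closed loop, in exponential coordinates $\tilde R=\tilde R^*\exp(\epsilon^\times)$, is block upper-triangular. Its $\epsilon$-block is the constant matrix $-\mathrm{Hess}\,\mathcal{L}(\tilde R^*)$, which has at least one positive eigenvalue by the list above. Its $\tilde x$-block is $A(t)-K(t)C$, which is uniformly exponentially stable. The coupling through $\tilde z$ is bounded.

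Such a system has an exponential dichotomy with a nontrivial unstable part. The nonautonomous stable manifold theorem then gives, for every initial time, a local stable set of codimension at least one. The set of initial conditions converging to $(\tilde R^*,0)$ is a countable union of preimages of such sets under the flow map, which is a diffeomorphism. Hence it has zero measure in $\mathrm{SO}(3)\times\mathbb{R}^6$. Combined with your Step~3 and the local exponential stability of $\mathcal{E}_s$, this yields AGAS.
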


\begin{proof}
The proof of Theorem~\ref{Theorem1} follows the same steps as the proof of Theorem~1 in \cite{tchonkeu2025barometer} and is therefore omitted here.
\end{proof}
Theorem~\ref{Theorem1} establishes almost global asymptotic stability (AGAS) of the estimation errors whenever the persistence of excitation condition in Lemma~\ref{Lemma1} is satisfied, overcoming the local convergence limitations reported for Pitot-based sensing in~\cite{oliveira2024pitot} while remaining consistent with the topological constraints of $\mathrm{SO}(3)$.

\subsection{Discrete-Time Implementation}
The proposed observer is implemented at the IMU sampling period~$T_s$. 
Over each interval $[t_k,t_{k+1})$, we assume the measured acceleration 
$\mathbf a_k$ and angular velocity $\boldsymbol\omega_k$ are constant 
(see~\cite{Bryne2017}). Let $\Omega_k \doteq \boldsymbol\omega_k^\times$ 
and $\theta_k \doteq \|\boldsymbol\omega_k\|T_s$. Approximating the discrete process noise by $S_{d,k}\approx S_k T_s$, the transition block associated with vectors expressed in the rotating body frame satisfies 
$\dot \phi_{11}(t)=-\Omega_k \phi_{11}(t)$ with $\phi_{11}(0)=I_3$, 
which integrates to the incremental rotation via Rodrigues' formula~\cite{Ma2004rodriguesformula}
\begin{equation}
\phi_{11,k} 
= I_3 - \frac{\sin\theta_k}{\|\boldsymbol\omega_k\|}\,\Omega_k
+ \frac{1-\cos\theta_k}{\|\boldsymbol\omega_k\|^2}\,\Omega_k^2 .
\label{eq:Phi22}
\end{equation}
Using this result, a first-order discretization of the continuous-time 
state and input matrices in~\eqref{eq:state_command_matrix} yields
\begin{equation}
A_{d,k} \approx
\begin{bmatrix}
\phi_{11,k} & gT_sI_3\\
\mathbf{0}_{3 \times 3}& \phi_{11,k}
\end{bmatrix},\qquad
B_{u_d,k} =
\begin{bmatrix}
T_sI_3 \\
\mathbf 0_{3\times 3}
\end{bmatrix}.
\label{eq:AdBd}
\end{equation}
The resulting discrete-time observer, summarized in 
Algorithm~\ref{algo:Discrete_proposed_obs}, follows the standard 
correction--prediction structure with the above state and input matrices. The attitude observer~\eqref{eq:attitude_observer} is discretized at the 
IMU frequency using exponential-Euler integration on $\mathrm{SO}(3)$~\cite{mahony2008nonlinear} (see line~23 of Algorithm~\ref{algo:Discrete_proposed_obs}). 

\begin{algorithm}[!t]
\caption{Discrete-Time Implementation of the proposed observer}
\label{algo:Discrete_proposed_obs}
\begin{algorithmic}[1]
\Statex \textbf{Input:} $\hat x_{0|0},\,P_{0|0},\hat R_0$; $\mathbf a_k,\boldsymbol{\omega}_k$ ; $\mathbf m_{\mathcal B,k}, \mathbf m_{\mathcal I,k}$; $y_{p,k}$; $T_s$; $S_k$; $Q_k$; gains $k_z,k_m$; Pitot geometry $B$.
\Statex \textbf{Output:} $\hat x_k,\,\hat R_{k},\,$ \(\forall k \in \mathbb{N}\)
\For{each time \(k \geq 0\)}
\If{IMU data \(\mathbf a_k\), \(\omega_k\) is available}
    \State $\Omega_k \gets \boldsymbol{\omega}_k^\times$,\quad $\theta_k \gets \|\omega_k\|\,T_s$, \quad $S_{d,k}\gets S_k\,T_s$;
    \State {$\phi_{11,k} \gets $~\eqref{eq:Phi22};}
    \State $A_{d,k}, B_{u_d,k} \gets$~\eqref{eq:AdBd};
    \State $\hat x_{k+1|k} \gets A_{d,k}\hat x_{k|k} + B_{d,k} \mathbf a_k$;
    \State $P_{k+1|k} \gets A_{d,k}P_{k|k}A_{d,k}^\top + S_{d,k}$
\EndIf
\If {Pitot data is available}
    \State $C_k \gets$\eqref{eq:output_matrix}; 
    \State $K_{k+1} \gets P_{k+1|k} C_k^\top (C_kP_{k+1|k} C_k^\top + Q_{k})^{-1}$
    \State \(\hat x_{k+1|k+1} \gets \hat x_{k+1|k}+ K_{k+1}\left(y_{p,k+1}-C_k \hat x_{k+1|k}\right)\)
    \State \(P_{k+1|k+1} \gets (I_6-K_{k+1}C_k)P_{k+1|k}\)
\Else
    \State $\hat x_{k+1|k+1}\gets \hat x_{k+1|k}$; $P_{k+1|k+1}\gets P_{k+1|k}$.
\EndIf
\State $P_{k+1|k+1}\gets \frac{1}{2}(P_{k+1|k+1} + P_{k+1|k+1}^{\top})$
\If{Magnetometer data is available or k= 0}
    \State $\bar{\mathbf m}_{\mathcal I,k} \gets \bar\pi_{\mathbf e_3}\mathbf m_{\mathcal I}$;\quad $\bar{\mathbf m}_{\mathcal B,k} \gets \bar\pi_{\hat z_k}\mathbf m_{\mathcal B,k}$
    \State $\sigma_{R_{m,k}} \gets k_m(\bar{\mathbf m}_{\mathcal I,k} \times \hat R_k \bar{\mathbf m}_{\mathcal B,k})$
\EndIf
\State $\sigma_{R,k} \gets k_z(\mathbf e_3 \times \hat R_k \hat z_k) + \sigma_{R_{m,k}} $
\State $\hat R_{k+1} \gets \hat R_k \exp\!\big((\boldsymbol{\omega}_k - \hat R_k^\top \sigma_{R,k})^\times T_s\big)$
\EndFor
\end{algorithmic}
\end{algorithm}
\section{EXPERIMENTS}
This section presents experimental results obtained from a flight test conducted on a MakeFlyEasy Fighter VTOL UAV illustrated in figure~\ref{fig:veh_pitot_imu}. Although the proposed formulation assumes slowly time-varying wind conditions, the flight was performed in a windy environment. Consequently, these experiments provide a meaningful assessment of the robustness of the proposed observer under realistic operating conditions.
The performance of the proposed discrete-time cascade observer was evaluated using real flight data recorded by a \textit{Pixhawk} flight controller running the PX4 autopilot~\cite{meier2015px4}. The onboard sensor suite includes an IMU providing specific force and angular rate measurements \(\mathbf a\) and \(\boldsymbol{\omega}\) respectively, a three-axis magnetometer \( \mathbf{m}_{\mathcal{B}}\in \mathbb{S}^2\), and a forward-aligned Pitot tube, i.e. $m=1$ and $B = \mathbf{e}_1$ equipped with an \textit{SDP33} Sensirion differential pressure sensor~\cite{SensirionSDP33}.

To assess the impact of sideslip information on the observer performance, two experimental configurations are considered. In Configuration~1, coordinated flight is assumed, in which a zero-sideslip condition is introduced through the pseudo-measurement $V_{a,2}=0$ ($\beta =0 $), as encoded in the measurement model \eqref{eq:pitot_meas_conf_1}. 
In Configuration~2, no pseudo-measurement is introduced, and the Riccati observer relies solely on a forward-aligned Pitot tube measurement as defined in \eqref{eq:pitot_meas_conf_2}.

The autopilot records flight logs for each onboard sensor and provides estimates of the vehicle attitude \(\bar{R}\) and the body-frame air velocity \(\bar{\mathbf{V}}_{a}\) (see Fig.~\ref{fig:Real_Air_Velocity}), computed internally by an extended Kalman filter \cite{px4_ecl_ekf_tuning}. In order to avoid the influence of magnetometer disturbances on the estimation process and to isolate the contribution of pitot and inertial measurements, the attitude estimate \(\bar R\) provided by the autopilot is used solely to reconstruct an equivalent inertial magnetic field according to
$
\mathbf{m}_{\mathcal{I}} := \bar R\,\mathbf{m}_{\mathcal{B}}.
$
This reconstructed inertial magnetic field is treated as a known reference and is used to evaluate both reduced and full attitude estimation errors, without directly relying on magnetometer measurements in the estimator. The air velocity estimate \(\bar{\mathbf{V}}_{a}\) and tilt estimate $\bar{z}:=\bar{R}^{\top}\mathbf{e}_3$ are used as a reference for assessing the accuracy of the proposed air velocity \(\hat{\mathbf{V}}_{a}\) and tilt \(\hat{z}\) estimates, respectively . The reliability of \(\bar{\mathbf{V}}_{a}\) is verified by comparing its first component \(\bar{V}_{a,1}\) with the Pitot-tube measurement \(V_{a,1}\), showing very close agreement (see Fig.~\ref{fig:comparison_real_vs_measured_Air_Velocity}).

In this experiment, the estimator was initialized over a selected flight segment in which the persistent excitation requirement in \eqref{eq:PE_Condition} holds. This guarantees uniform observability, which is evaluated by calculating the condition number of
\begin{equation}
\bar M(t_k,t_{k+1}) := \int_{t}^{t + \delta}\bar R(s)BB^{\top}\bar R^{\top}(s)ds, 
\end{equation}
over the interval $[t_k,t_{k+1}]$ with $t_k =k\delta$, $k$ a positive integer, and $\delta = 2 [s]$. In the two Configurations, the initial conditions at $t_0 = 40[s]$ are such that 
$\bar{\mathbf{V}}_a(t_0) \approx \begin{bmatrix}20.8&-0.48&5.9 \end{bmatrix}^{\top}[\mathrm{m/s}],$
$\hat{\mathbf{V}}_a(t_0) = \begin{bmatrix}10&2&0.3 \end{bmatrix}^{\top}[\mathrm{m/s}]$
$\bar{z}(t_0) \approx \begin{bmatrix}0.0049&0.0125&1.0 \end{bmatrix}^{\top}$
$\hat z(t_0) = \hat R(t_0)^{\top}\mathbf{e}_3$, where initial orientation estimate $\hat R(t_0)$ corresponds to the initial Euler angles with yaw  \(\pi/6\), pitch \(-\pi/18\), and roll \(\pi/9\). The observer gains are selected as $k_z=2$ $k_m=1$, with $S = \mathrm{diag}(0.02,0.01,0.01,10^{-4}I_3)$, $P(t_0)=\mathrm{diag}(116.6,6.15,3.3,0.6I_3)$. 
For Configuration~1, the Pitot sensor covariance $Q = \mathrm{diag}(\sigma^2_{V_{a,1}},\sigma^2_{V_{a,2}}),$ where $\sigma^2_{V_{a,1}}= 10^{-3}[(\mathrm{m/s})^2]$ models the Pitot tube measurement noise on $V_{a,1}$, and $\sigma^2_{V_{a,2}}$ corresponds to the variance associated with the pseudo-measurement on $V_{a,2}$.
Although $V_{a,2}$ (or equivalently) has zero mean, atmospheric turbulence and lateral–directional dynamics induce fluctuations around this equilibrium. To account for these uncertainties, a large variance is assigned to $\sigma^2_{V_{a,2}}$; in practice it is chosen as $\sigma^2_{V_{a,2}} \approx 10\sigma^2_{V_{a,1}}.$
For Configuration~2, the covariance is reduced to $Q=\sigma^2_{V_{a,1}}$. 

The accelerometer and gyroscope measurements $\mathbf{a}(t)$ and $\boldsymbol{\omega}(t)$ are discretized at $250~[\mathrm{Hz}]$, while the Pitot and magnetometer measurements are available at $50~[\mathrm{Hz}]$ and $10~[\mathrm{Hz}]$, respectively.
Figures~\ref{fig:AirVelocity_Components}, Figures~\ref{fig:Tilt_Component_Error}, and~\ref{fig:Euler_Angles_Att_Error} illustrate the air velocity components and full air velocity estimation error (\(\|\bar V_a-\hat V_a\| \)), the tilt components and reduced attitude error (\(\left\| \bar{z} - \hat z \right\|\)), and the Euler angles and full attitude estimation error (\(\operatorname{trace}(I_{3} - \hat{R}\bar R^\top)\)), respectively. 
The air velocity estimation results show a clear distinction between the two configurations. Throughout the flight, the forward component $\hat {\mathbf{V}}_{a,1}$ converges to the onboard reference $\bar{\mathbf{V}}_{a,1}$ for both cases due to the direct constraint imposed by the Pitot measurement. However, significant differences arise in the lateral and vertical components. Configuration~2 exhibits large root-mean-square error (RMSE) in the lateral and vertical air velocity components, approximately $12 [\mathrm{m/s}]$ and $9 [\mathrm{m/s}]$, respectively. This behavior is consistent with the higher values of the condition number of $\bar{M} (\mathrm{cond}(\bar M))$(see Fig.~\ref{fig:PE_Condition}), indicating poor numerical conditioning and loss of observability of the velocity direction dynamics when excitation is limited. In constrast, Configuration~1, maintains substantially lower $\mathrm{cond}(\bar M)$ values, thereby preserving practical observability. Between $t \in [43s, 70s]$, where the pitch angle varies significantly, the air velocity estimate error in Configuration~1 decreases to $\approx 3 [\mathrm{m/s}]$, showing that improved conditioning of $\bar M$ leads to better estimation accuracy. Moreover, during the coordinated flight segments $t \in [40.6s, 43s]$ and $t \in [70s, 80s]$ where the pitch angle remains nearly constant (see Fig.~\ref{fig:Euler_Angles_Att_Error}), Configuration~1 maintains lower $\mathrm{cond}(\bar M)$ values, and air velocity estimate $\mathbf{\hat V}_a$ converges to its actual values as expected. The reduced condition number of $\bar M$ in Configuration~1 also contributes to improved estimation performance for both reduced and full attitude states as shown in Figures~\ref{fig:Tilt_Component_Error} and \ref{fig:Euler_Angles_Att_Error}, respectively. In coordinated flight, all estimation errors remain bounded and converge to zero, confirming the asymptotic convergence and robustness properties of the proposed observer. Overall the results demonstrate that the pseudo-measurement acts as an effective observability-restoring mechanism that improves the estimation reliability for both air velocity and full attitude, even under weakly excited flight conditions and aggressive maneuvers.

\begin{figure}[!t]
    \centering
    \includegraphics[width=0.75\linewidth]{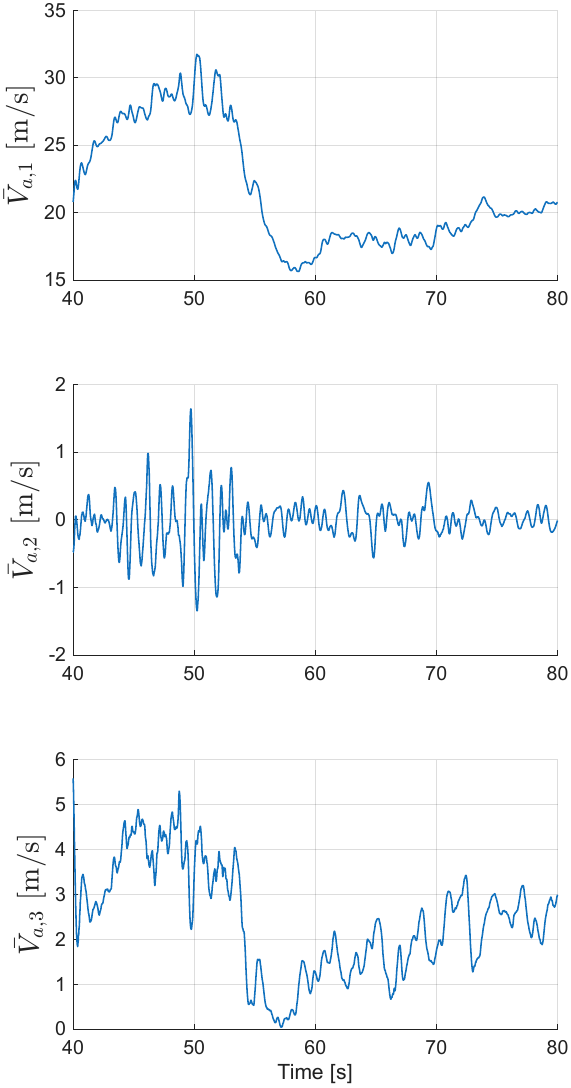}
    \caption{Onboard Air-velocity Components.}
    \label{fig:Real_Air_Velocity}
\end{figure}

\begin{figure}[!t]
    \centering
    \includegraphics[width=0.75\linewidth]{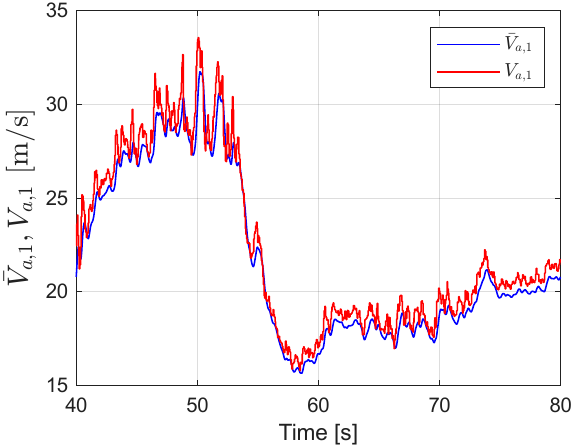}
    \caption{Comparison between measurement of $V_{a,1}$ (Red) and $\bar{V}_{a,1}$ (Blue) in the real flight data set.}
    \label{fig:comparison_real_vs_measured_Air_Velocity}
\end{figure}

\begin{figure}[!t]
    \centering
    \includegraphics[width=0.75\linewidth]{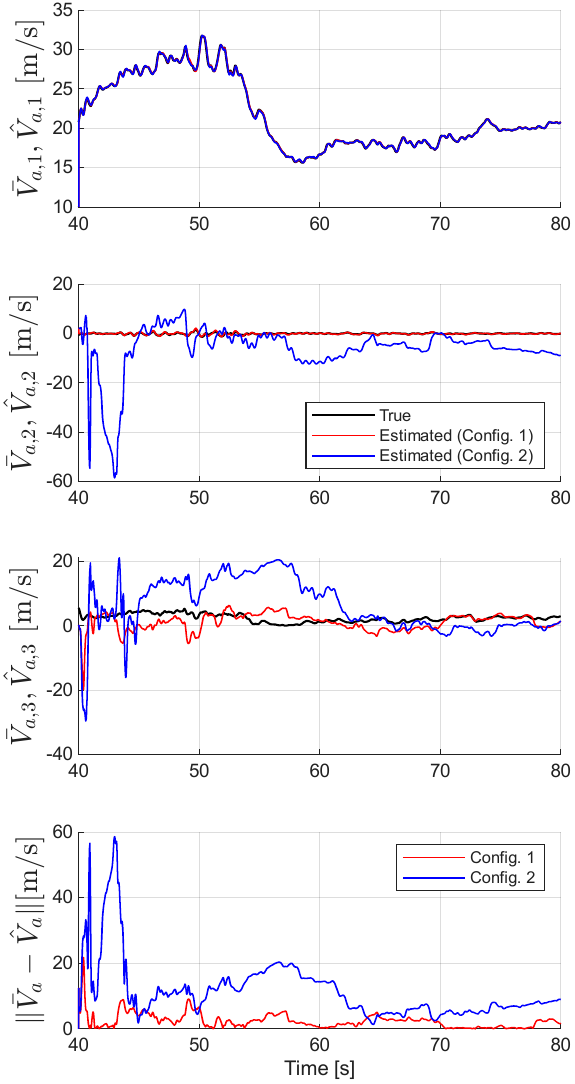}
    \caption{Air Velocity components and error: Configuration 1 (Red) and Configuration 2 (Blue).}
    \label{fig:AirVelocity_Components}
\end{figure}

\begin{figure}[!t]
    \centering
    \includegraphics[width=0.75\linewidth]{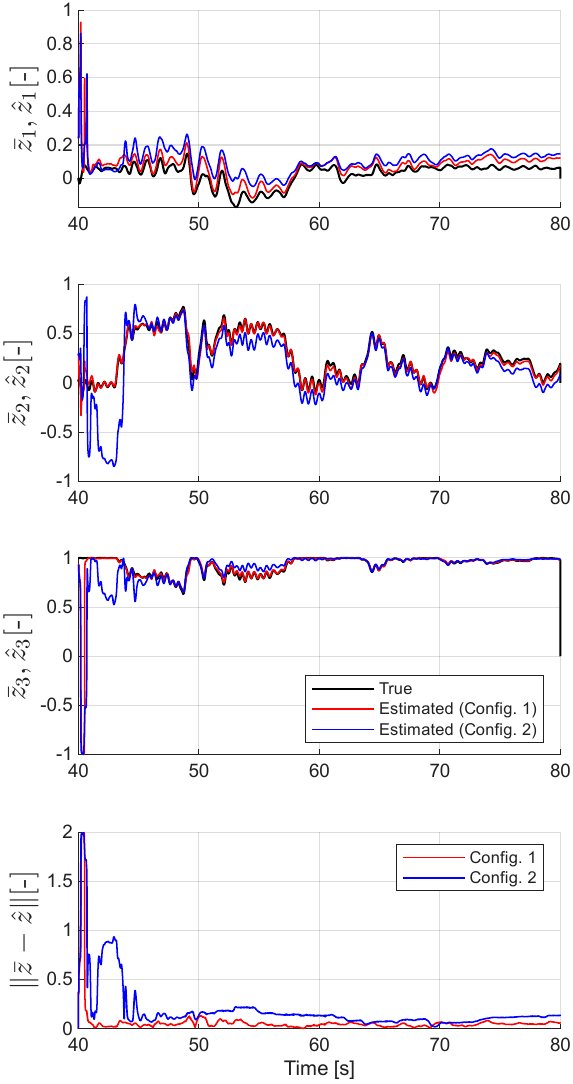}
    \caption{Tilt components and tilt error.}
    \label{fig:Tilt_Component_Error}
\end{figure}

\begin{figure}[!t]
    \centering
    \includegraphics[width=.75\linewidth]{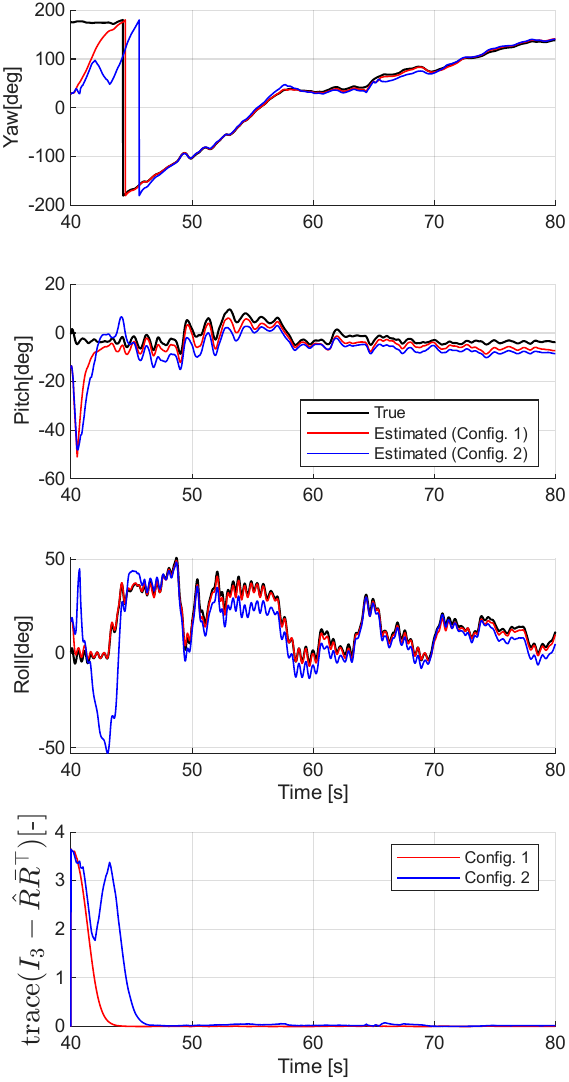}
    \caption{Euler angles and attitude error.}
    \label{fig:Euler_Angles_Att_Error}
\end{figure}

\begin{figure}[!t]
    \centering
    \includegraphics[width=.75\linewidth]{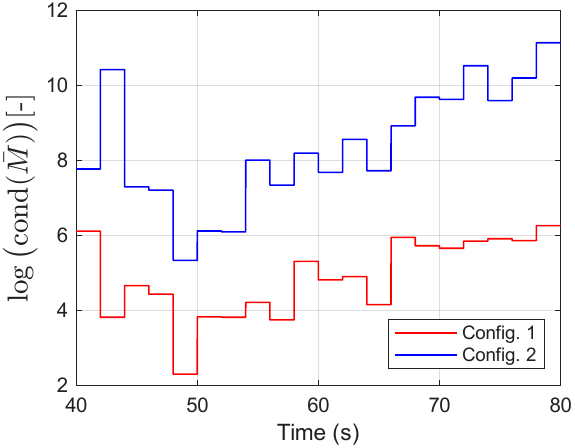}
    \caption{Evolution of cond$(\bar M)$ over the time.}
    \label{fig:PE_Condition}
\end{figure}

\section{CONCLUSION}
In this work, we developed a cascade observer for Pitot-aided estimation, combining a continuous-discrete Riccati observer for air velocity and gravity-direction dynamics with a nonlinear observer on $\mathrm{SO}(3)$ for full attitude estimation. Theoretical analysis (Lemma~\ref{Lemma1} and Theorem~\ref{Theorem1}) established almost-global asymptotic convergence of the interconnected error dynamics under persistently exciting motion, and experimental validation with flight data from a fixed-wing VTOL UAV confirmed convergence of both air velocity and attitude estimation errors. Incorporating a zero-sideslip constraint, consistent with coordinated-flight aerodynamics, was shown to enhance overall estimation performance. While the proposed observer demonstrated robustness on real flight data collected in a windy environment, the theoretical analysis assumes bounded, slowly time-varying wind and does not explicitly estimate sensor biases; extending the framework to jointly estimate wind components and inertial sensor biases constitutes a natural direction for future work.

\bibliographystyle{IEEEtran}
\bibliography{root}

\end{document}